\newcommand{\beq}[1]{\begin{equation} \label{eq:#1}}
\newcommand{\eeq}{\end{equation}}
\newcommand{\bea}[1]{\begin{eqnarray} \label{eq:#1}}
\newcommand{\eea}{\end{eqnarray}}
\newcommand{\N}{{\ensuremath{\mathbb N}}} 
\newcommand{\Z}{{\ensuremath{\mathbb Z}}} 
\newcommand{\R}{{\ensuremath{\mathbb R}}} 
\newcommand{\var}{\delta}  
\newcommand{\DeltaF}[2][]{\mathop{\delta}_{_{#1}}\nolimits\!\left[#2\right]}  
\newcommand{\deltaF}[2][]{\mathop{\delta}_{_{#1}}\nolimits\!\left(#2\right)}  
\newtheorem{thm}{Theorem}    
\begin{document}

\title{\bf Microcanonical quantum cosmology: gauge fixing and functional determinants on a circle}
\author{Dmitry Nesterov and Andrei Barvinsky}
\date{}
\maketitle
\begin{center}
  \hspace{-0mm}{\em Theory Department, Lebedev Physical Institute\\
  Leninsky Prospect 53, Moscow, Russia, 119991}
\\
 \noindent
\end{center}

 \noindent



\newcommand{\Comment}[1]{}  

\newcommand{\Section}[1]{\section {#1} \hspace{\parindent} }
\newcommand{\EndSection}[1]{\newpage}

\newcommand{\SubSection}[1]{\subsection {#1}  \hspace{\parindent} }
\newcommand{\EndSubSection}[1]{}

\newcommand{\mt}[1]{$#1$}
\renewcommand{\math}[1]{\(#1\)}
\newcommand{\Math}[1]{\[\displaystyle{#1}\]}

\newcommand{\FE}{\mathcal{F}} 
\newcommand{\incr}{{\vartriangle}}
\newcommand{\dd}{\boxminus} 

\newcommand{\NC}{\mathcal{C}}  
\newcommand{\Det}{\mathop{\mathrm{Det}}}
\newcommand{\ToDo}{{\color{red}ToDo }}

\begin{abstract}
Gauge fixing procedure in the path integral for the microcanonical statistical sum in quantum cosmology is considered in a special class of gauge conditions free from residual gauge ambiguities. Relevant functional determinants and functional delta-functions on a circle are considered.
\end{abstract}

\section{Introduction}

The goal of this short note is to revisit the one-loop approximation for the microcanonical partition function in quantum cosmology \cite{Barvinsky:2007vb,Barvinsky:2010yx}. Application of the density matrix in spatially closed cosmology shows that its statistical sum is mainly determined by the minisuperspace sector of the theory described by the scale factor $a$ and lapse function $N$ of the Euclidean FRW metric
    \begin{equation}
    ds^2 = N^2(\tau)d\tau^2 + a^2(\tau)d^2\Omega^{(3)}. \label{metric3}
    \end{equation}
Its time-parametrization invariant action has the form
    \begin{eqnarray}
    &&\varGamma[\,a,N\,]=
    \oint d\tau\,N {\cal L}(a,\dot a/N)+ F(\eta),\quad
    \eta=\oint d\tau\,\frac{N}a.                   \label{eta}
    \end{eqnarray}
Here ${\cal L}(a,\dot a/N)$ is a local Lagrangian and $F(\eta)$ is a free energy of quantum matter fields as a function of the effective inverse temperature $\eta$ -- the period of the Euclidean time $\tau$ measured in units of the conformal time parameter. Integration over $\tau$ runs over this period and is denoted by $\oint$. Remarkable property of this expression is that for {\em any} Lagrangian ${\cal L}(a,\dot a/N)$ the quadratic part of $\varGamma[\,a,N\,]$ on the classical background has a generic form \cite{Barvinsky:2010yx}
  \bea{CosmologicalQuadraticActionN1}
  &&\varGamma_{2} [n,\varphi]
  =\varepsilon\frac12 \oint d\tau \; \left( \dot\varphi^2 + \frac{\ddot{g}}{g}\varphi^2 - 2g \dot{\varphi} n + 2 \dot{g} \varphi n + g^2 n^2\right)
  + \frac{1}{2}\, \NC \left(\oint d\tau \, n \right)^2,
  \label{CosmologicalQuadraticActionN1}\\
  &&g\equiv a\dot{a}\sqrt{|\mathcal{D}|},\quad \mathcal{D}\equiv \frac{\partial^2 L(a,\dot{a})}{{\partial \dot{a}}^2},\quad \varepsilon\equiv\frac{\mathcal{D}}{|\mathcal{D}|},\quad
  \NC=\frac{d^2F}{d\eta^2},
 \eea
where the perturbations of the scale factor and lapse are parameterized by new variables\footnote{New variables are related to perturbations $\delta a$ and $\delta N$ as:
$n=\frac{1}{a}\left(\frac{\var N}{N} - \frac{\var a}{a}\right), \quad \varphi =\sqrt{|\mathcal{D}|}\, \var a$.} $\varphi$ and $n$, and $N=1$ gauge is used for background. The coefficients of this form are built in terms of the function $g=g(\tau)$ and the parameter $\NC$ -- functionals of the gravitational background characterized by a periodic scale factor $a(\tau)$. This background forms the cosmological instanton with the topology of $S^1\times S^3$. The $S^1$ part is associated with the Euclidean time $\tau$ which parameterizes the instanton metric (\ref{metric3}) in the gauge $N=1$, and $a(\tau)$ is a periodically oscillating radius of the sphere $S^3$.


The action (\ref{CosmologicalQuadraticActionN1}) is gauge invariant under the local coordinate transformations
     \begin{equation}
     \var n(\tau) = -\, \dot{f}(\tau),
     \qquad   \var \varphi (\tau) =
     - \,g (\tau) \,{f(\tau)}\;. \label{trans}
     \end{equation}
Therefore, the path integral for the statistical sum of the model requires Faddeev-Popov gauge fixing procedure which was performed in \cite{Barvinsky:2010yx} in the ``relativistic" gauge
     $\chi(n,\varphi)\equiv\dot n=0$.

Since this gauge leaves unfixed residual gauge transformations associated with conformal Killing vectors of the background (\ref{metric3}), additional gauge fixing procedure was necessary, which effectively has led to linearly dependent generators treated by the Batalin-Vilkovisky technique of reducible gauge theories \cite{BatalinVilkovisky}. This makes further analysis of the theory (in particular, the convexity issue for the action of the theory at the saddle points of the path integral) very cumbersome. So here we replace the gauge fixing procedure of \cite{Barvinsky:2010yx} in the gauge (\ref{gauge0}) by using the alternative gauge free from residual gauge transformations.

Our main result will be a calculation of the gauge-fixed Gaussian path integral for the one-loop prefactor $P$ of the statistical sum
    \begin{equation} \label{eq:Z}
     Z = \,P\,e^{-\varGamma_{0}},
    \end{equation}
which is semiclassically dominated by the tree-level contribution $e^{\varGamma_{0}}$ at the saddle point at which the on-shell value of the action (\ref{eta}) is given by $\varGamma_{0}$. This involves treatment of a special delta-function type gauge, which is nonlocal in the Euclidean time, and finding closed algorithms for functional determinants of nonlocal operators on spaces of periodic functions. So this calculation turns out to be technically illuminating and deserves presentation.

\section{Gauge fixing}

Since the system is a gauge one the correct expression for one-loop contribution according to Faddeev-Popov technique is
    \begin{equation} \label{eq:P}
    P =  \Det Q[\chi] \int D n \int D \varphi\; \DeltaF{\chi}\; e^{-\varGamma_{2} [n,\varphi]}
    \end{equation}
where $\chi\big(n,\varphi\big)$ is some gauge condition completely fixing the gauge invariance, and $Q$ -- corresponding Faddeev-Popov determinant. The integration is performed over minisuperspace fields $n (\tau),\, \varphi (\tau)$ defined on a circle\footnote{Functional space of (real-valued) functions on a circle $S^1$ of length $T$ is equivalent to that of (real-valued) functions on $R$ periodic with period $T$. so somewhere we call such functions ``periodic''.} $S^1$ of length $T$.

One of possible gauges (a family of gauges) completely fixing the gauge ambiguity is
    \beq{gauge}
    \chi(n,\varphi)=\beta\dot{n}-\gamma\oint k\,\varphi
    \eeq
where $k(\tau)$ is some function on $S^1$ and $\beta,\gamma$ are some constants.\footnote{The result easily generalizes for $\beta,\gamma \to \beta(\tau),\gamma(\tau)$ with $\beta(\tau)\neq0$ and $\oint \frac{\gamma(\tau)}{\beta(\tau)}\neq0$. In particular this generalization allows to make Faddeev-Popov operator symmetric. But on the other hand it causes complications in calculations. In the article we preserve these constants as the trace of this possible generalization and for the reason of additional dimensionality check.}
In what follows for the sake of convenience we will omit denoting the measure $d\tau$ in all further integrals $\oint d\tau$ where it can not lead to ambiguity.

Absence of residual gauge invariance in this gauge follows from a simple fact that for periodic $n(\tau)$ (i.e. $n(\tau)$ on $S^1$) the equation $\chi(n,\varphi)=0$ implies that both terms of (\ref{eq:gauge}) separately vanish,
    \begin{equation}\label{eq:gauge_decouple}
    \chi(n,\varphi)=0\quad\to\quad \dot n=0,\quad \oint\,k\varphi=0,
    \end{equation}
and thus forbid transformations (\ref{trans}) with a periodic $f(\tau)$.

The Faddeev-Popov operator \math{Q} for such gauge $\chi$, acting on the gauge transformation parameter $f$, is defined by the equation
    \beq{Q}
     Qf(\tau) \;=\; -\beta\frac{d^2f(\tau)}{d\tau^2}+\gamma\oint d\tau'\,k(\tau')\,g(\tau')\,f(\tau').
    \eeq
It does not have periodic zero modes and, therefore, non-degenerate. Its determinant on the space of square integrable functions on a circle $L^2(S^1)$
    \beq{detQ}
     |\Det Q | \;=\; \left|\frac{\gamma}{\beta}\,T^2{\textstyle \oint k\,g} \right|.
    \eeq
This can be shown by direct spectrum analysis (see Appendix \ref{sect:Det}).

Calculation of the rest part of the prefactor
    \beq{integraldelta}
     \int D n \int D \varphi\; \DeltaF{\chi}\; e^{-\varGamma_{2} [n,\varphi]}
    \eeq
simplifies due to the decoupling of the of functional delta-function,
    \beq{}
     \DeltaF{\beta\dot{n} + \gamma\textstyle{\oint} k\,\varphi} =\frac{\sqrt{2\pi}}{\sqrt{T}}\;
     \DeltaF[\Comment{\text{inh sector of }n}]{\beta\dot{n}} \;\deltaF[\Comment{\text{some subsector of } \varphi}]{\gamma\textstyle{\oint} k\,\varphi},
    \eeq
which is a corollary of the relation (\ref{eq:gauge_decouple}). Nontrivial element here is the proportionality coefficient in the right hand side and the definition of the rule how the functional delta-function $\DeltaF[\Comment{\text{inh sector of }n}]{\beta\dot{n}}$ acts on a test functional $f[n]$  in the path integral
    \bea{theorem20}
    \int\limits_{L^2(S^1)}\!\!\!\! Dn\;\tilde{\delta}[\beta\dot{n}]\; f[n]
    \; = \;
    \frac{|\beta|}{\sqrt{T}}
    \int\limits_{\R} \frac{dc}{\sqrt{2\pi}} \;f[c].
    \eea
Here the integration over square-integrable functions on a circle (``periodic'' functions) is denoted by $L^2(S^1)$ and numerical integration over the real axis is labeled by $\R$. Below we prove this property.

Integrating these delta-functions in (\ref{eq:integraldelta}) we reduce the functional integral over $n(\tau)$ to a numerical integral over its homogeneous mode and restrict $\varphi(\tau)$ to the subspace of functions transversal to $k(\tau)$ which we denote $L^2_{\perp k}(S^1)$
    \bea{reducedrep}
    \frac{|\beta|}{|\gamma|} \frac{1}{\sqrt{\oint k^2}} \frac{1}{T}
    \!  \int\limits_{L^2_{\perp k}(S^1)}\!\!\!\!\! D{\varphi}_{\perp k}
    \int\limits_{\R} \!\frac{dc}{\sqrt{2\pi}}
    \; \exp\left\{{ \small  -\frac{\varepsilon}{2}  \big( \varphi_{\perp k}(\tau)\,,\, c\big)
    \left( \!\!\begin{array}{cc}
     F\,\delta(\tau {-} \tau') &  \!\!\! 2\dot{g}(\tau)\\
     2\dot{g} (\tau') &  \!\!\!\left( \oint g^2 + \frac{T^2}{\varepsilon}\NC\right)
    \end{array}\!\!\right)
    \left( \!\!\begin{array}{c}
      \!\varphi_{\perp k}(\tau')\,\\
       c
    \end{array}\!\!\!\!\right)\!\!}\;
    \right\}
    \eea
where
   \begin{eqnarray} \label{eq:F}
   F=-\frac{d^2}{d\tau^2}
   +\frac{\ddot{g}(\tau)}{g(\tau)}
   \end{eqnarray}
is the operator introduced in \cite{Barvinsky:2011hv}, and integration over repeated time arguments $\tau$ and $\tau'$ is assumed in the exponential of this expression (in accordance with the Einstein rule for the condensed index notation). Factors in the denominators in front of the integral originate from nontrivial arguments of delta-functions. Calculational details for (\ref{eq:reducedrep}), in particular definitions and decoupling of delta-functions, zeta-renormalization and the origin of prefactors can be found in Appendix \ref{sect:PIwDelta}.

The Gaussian integration in (\ref{eq:reducedrep}) gives
\bea{quad_form_det}
 &&\!\!\frac{|\beta|}{|\gamma|} \frac{1}{\sqrt{\oint k^2}} \frac{1}{T}
 \;\, \Big({\Det}_{\perp k}(\varepsilon F)\Big)^{-1/2}
 \!\!\!\times \left( \varepsilon \Big( \oint g^2  + \frac{T^2}{\varepsilon}\NC\Big)
  - \varepsilon \oint\! d\tau \oint\! d\tau'\; 2\dot{g}(\tau) G_k(\tau,\tau') 2\dot{g}(\tau')\right)^{-1/2}
\eea
where $G_k(\tau,\tau')$ -- Green's function \cite{Barvinsky:2011hv} for operator $F$ (\ref{eq:F}). Since the latter is degenerate there is an ambiguity in choosing the Green's function, so $G_k(\tau,\tau')$ is the one, which is orthogonal to $k(\tau)$:\; $\oint d\tau'\; G_k(\tau,\tau') \,k(\tau')=0$.

The determinant in the first factor of (\ref{eq:quad_form_det}), as shown in Appendix \ref{sect:Base}, can be expressed as
    \begin{eqnarray} \label{eq:det_rel_eF}
      {\Det}_{\perp k} (\varepsilon F)
      = \frac{\left(\oint k g\right)^2}{\oint k^2 \oint g^2} \,{\Det}_{\perp g} (\varepsilon F)
      = \frac{\left(\oint k g\right)^2}{\oint k^2 } \frac{1}{\varepsilon}  \incr.
    \end{eqnarray}
where in the last equality we substituted ${\Det}_{\perp g}\, F  = \incr \oint g^2$. The latter was obtained in  \cite{Barvinsky:2011hv} and expresses the determinant of operator with a zero mode on $S^1$ in terms of its the monodromy coefficient $\incr $. The power of $\varepsilon$ was obtained via the zeta-function renormalization.

Using the explicit representation of $G_k(\tau,\tau')$ obtained in \cite{Barvinsky:2011hv} one can show that
   \begin{eqnarray}
      \oint d\tau \oint d\tau'\; \dot{g}(\tau) G_k(\tau,\tau') \dot{g}(\tau')=  \frac14 \left( \oint g^2 - \frac{T^2}{\incr}\right),
   \end{eqnarray}
so the latter factor in (\ref{eq:quad_form_det}) equals
    \beq{1000}
      \frac{\incr^{1/2}}{\varepsilon^{1/2}T}  \left(1 + \frac{\incr}{\varepsilon}\NC \right)^{-1/2}.
    \eeq

Assembling Eqs.(\ref{eq:Q},\ref{eq:quad_form_det},\ref{eq:det_rel_eF},\ref{eq:1000}) one finally gets
\bea{final}
 &&\Det Q \int\Comment{\limits_{L^2(S^1)}} D \varphi(\tau) \int\Comment{\limits_{L^2(S^1)}} D n(\tau)  \; \DeltaF{\beta\dot{n}+\gamma\textstyle\oint k\,\varphi} \;  e^{-\varGamma_{2} [n,\varphi]}
 \;=\; \left( 1+ \frac{\incr}{\varepsilon}\NC \right)^{-1/2}\;,
\eea
which gives the preexponential factor $P$ (\ref{eq:P}) to the amplitude of the Universe (\ref{eq:Z}). Needless to say the result actually is independent of the gauge chosen,

As a check of consistency note that in the absence of nonlocal ``matter'' contribution ($\NC=0$) the contribution of the homogeneous sector is trivial.

\section{Conclusion}
Remarkably, in the model of the CFT driven cosmology with
    \begin{eqnarray}
    \varepsilon=-1,\,\,\,\,\NC\equiv\frac{d^2F}{d\eta^2}<0,\,\,\,\,
    \frac{g''}g<0,
    \end{eqnarray}
the result (\ref{eq:final}) fully confirms the answer for the statistical sum prefactor initially obtained in \cite{Barvinsky:2010yx} for a particular class of instantons\footnote{As follows from a deeper analysis of the behavior of term $\frac{g''}g$ for the instantons under consideration it is not exactly negative definite, as was conjectured in \cite{Barvinsky:2010yx}. This fact complicates investigation of the convexity of the action (using the specific gauge suggested in \cite{Barvinsky:2010yx}, where $-\frac{g''}g$ becomes the potential). However it does not spoils the coincidence of the final one-loop results.}. It reads
    \begin{eqnarray}
     P={\rm const}\times\left(1-I\frac{d^2F}{d\eta^2}\right)^{-1/2},
    \end{eqnarray}
where $\incr\equiv -I/\varepsilon$ is the monodromy factor of the operator (\ref{eq:F}) and a numerical constant coefficient is independent of the parameters of the background instanton and, therefore, completely irrelevant. Calculations of \cite{Barvinsky:2010yx}, where the functional determinants were obtained by the variational method, did not control their overall coefficients which depend on one such parameter -- the period of the Euclidean time $T$ (which is a nontrivial nonlocal functional of the instanton metric). Nontrivial factors of $T$ contained in the quantities (\ref{eq:detQ}), (\ref{eq:quad_form_det}) and (\ref{eq:1000}), however, completely cancel out in the final answer (\ref{eq:final}) and, thus, do not affect the validity of conclusions of \cite{Barvinsky:2010yx}.

In contrast to the technique of \cite{Barvinsky:2010yx} the present gauge fixing procedure avoids residual gauge symmetries and related complications of the quantization formalism for theories with reducible generators \cite{BatalinVilkovisky}. This opens prospects for the analysis of convexity properties of the Euclidean action of the theory and the selection criterion for physically relevant saddle points of the cosmological statistical sum.

\section{Acknowledgements}

The work of Dmitry Nesterov was partially supported by RFBR grant 14-02-01173 and Andrei Barvinsky is grateful to RFBR grant 14-01-00489.

\appendix

\section{Determinant of the Fadeev-Popov operator}
\label{sect:Det}

Here we prove (\ref{eq:detQ}) that for nondegenerate operator $Q$ (\ref{eq:Q})  on $L^2(S^1)$
    \beq{}
     Q f= -\beta\frac{d^2}{d\tau^2}f + \gamma\oint k\,g\,f \;\nonumber
    \eeq
its zeta-regularized determinant is
    \beq{det_q_calc}
     \Det Q = \frac{\gamma}{\beta}\,T^2\oint k\,g.
    \eeq

For constant $\beta,\gamma$ one can prove that by analyzing the spectrum of the operator. For simplicity consider it is positively definite, which is for example guaranteed by $\beta\gamma\oint kg >0$.

Right eigenmodes\footnote{Analogous procedure can be performed explicitly for the left set of eigenmodes. Though the modes are different the spectrum obviously remains the same.} can be constructed from that of operator $-\frac{d^2}{d\tau^2}$:
    \beq{basisS1}
     e_n(\tau)\equiv\left\{\begin{array}{ll}
                       \sqrt{2/T} \sin {\frac{2\pi \tau}{T} |n|} & n\in -\N \\
                       \sqrt{1/T}\cdot 1 & n=0 \\
                       \sqrt{2/T} \cos {\frac{2\pi \tau}{T} n} & n\in \N
                     \end{array}\right. \;.
    \eeq

The structure of $Q$ implies that right eigenfunctions should be of the form
    \beq{RightQEigenbasis}
    \tilde{e}_n(\tau)\equiv\left\{\begin{array}{ll}
                       e_n(\tau) - \alpha_n e_0 & n\in -\N \\
                       e_0 & k=0 \\
                       e_n(\tau) - \alpha_n e_0 & n\in \N
                      \end{array}\right.
    \eeq
where $\alpha_n=\frac{\gamma\oint k g e_n}{\beta n^2 e_0 - \gamma\oint k g e_n}$. Correspondent eigenvalues are
    \beq{QEigenvalues}
    \tilde{\lambda}_n=\left\{\begin{array}{ll}
                       \beta\frac{(2\pi)^2}{T^2} n^2 & n\in -\N \\
                       \gamma\oint kg & k=0 \\
                       \beta\frac{(2\pi)^2}{T^2} n^2 & n\in \N
                      \end{array}\right.
    \eeq
Thus the product of eigenvalues equals to the product of nonzero eigenvalues of operator $-\beta\frac{d^2}{d\tau^2}$ and homogeneous mode eigenvalue $ \gamma\oint kg$.
Denoting operator $-\frac{d^2}{d\tau^2}$ on the space of functions $L^2_{\perp const}(S^1) \equiv \{h(\tau):\; \oint h(\tau) =0\}\,$ by $\dd$  one can write \cite{Barvinsky:2012an}
    \begin{equation}
          \Det Q= \beta^{\zeta_\dd(0)} \Det \dd \cdot \gamma\oint kg\;.
    \end{equation}
$\zeta_\dd(0)$ and $\Det \dd$ can be expressed via Riemann zeta function $\zeta_R(s)$ as  $\zeta_\dd(0)=2\zeta_R(0)=-1$ and
    \begin{eqnarray}\label{eq:lndetdd} 
    \ln{\Det} \dd
    =2\sum\limits_{n=1}^\infty
    \ln\left(\frac{2\pi n}{T}\right)^2
    =4\ln\left(\frac{2\pi}{T}\right)
    \zeta_R(0)-4\zeta'_R(0)=2\ln T
    \end{eqnarray}
which confirms (\ref{eq:det_q_calc}).

\section{Path integral with delta-functions}
\label{sect:PIwDelta}

In this section we discuss more precisely the calculation of the path integral (\ref{eq:integraldelta})
    \beq{}
     \int D n \int D \varphi\; \DeltaF{\chi}\; e^{-\varGamma_{2} [n,\varphi]} \nonumber
    \eeq

{ \color{black} \small Note the measure and $\delta$-function conventions we use in this paper:
    \bea{}
     && \int\limits_{L^2(S^1)} D \varphi\; e^{-\frac12\oint \varphi^2}=1\;;
     \nonumber\\
     && \int\limits_{L^2(S^1)} D \varphi\; f[\varphi(\tau)] = \int\limits_{L^2(S^1)} \prod\limits_{m\in\Z} \frac{d\varphi_m}{\sqrt{2\pi}}  \;f[\sum\limits_{m\in\Z}\varphi_m e_m(\tau)]\;;
     \nonumber\\
     && f[\psi] = \int\limits_{L^2(S^1)}\!\!\!\! D \varphi\; \DeltaF{\varphi-\psi}\; {f [\varphi]}
     =  \int\limits_{L^2(S^1)}\!\!\!\! D \varphi \!\!\!\! \int\limits_{L^2(S^1)}\!\!\!\! D \rho\; e^{i\oint \rho(\varphi-\psi)} \; {f [\varphi]} \nonumber
   \;,
    \eea
so that $\DeltaF{\varphi-\psi} = \prod\limits_{m\in\Z} \sqrt{2\pi} \deltaF{\varphi_{m}{-}\psi_{m}}$ under decomposition $\varphi(\tau)=\sum\limits_{m\in\Z}\varphi_m e_m(\tau)$ w.r.t. basis (\ref{eq:basisS1}).
\\

For ``inhomogeneous'' subspace ${L^2_{\perp const}(S^1)}\equiv \{\tilde{\varphi}(\tau)\,|\; \oint \tilde{\varphi} =0\}$ and correspondent delta-functions:
    \bea{}
     && \int\limits_{L^2(S^1)}\!\!\!\! D \varphi\; f[\varphi] =  \!\!\!\!\int\limits_{L^2_{\perp const}(S^1)}\!\!\!\!\!\!\!\! D \hat{\varphi} \int\limits_{\R} \frac{d\varphi_0}{\sqrt{2\pi}}\;f[\tilde{\varphi}(\tau)+\varphi_0 e_0(\tau)]\;;
     \nonumber\\
     && f[\tilde{\psi}] = \!\!\!\! \int\limits_{L^2_{\perp const}(S^1)}\!\!\!\! D \tilde{\varphi}\;\; \tilde{\delta}[\tilde{\varphi}-\tilde{\psi}] \;{f [\tilde{\varphi}]}
     \;=  \!\!\!\! \int\limits_{L^2_{\perp const}(S^1)}\!\!\!\!\! D \tilde{\varphi} \!\!\!\!
      \int\limits_{L^2_{\perp const}(S^1)}\!\!\!\!\! D \tilde{\rho}\;\; e^{i\oint \tilde{\rho}(\tilde{\varphi}-\tilde{\psi})} \; {f [\tilde{\varphi}]} \nonumber
   \;.
    \eea
\normalsize \normalcolor
}

The first step is the decoupling of the delta-functions for the gauge (\ref{eq:gauge}):
\begin{thm}
    \bea{theorem1}
    \DeltaF{\beta\dot{n} + \gamma\textstyle{\oint} k\,\varphi}
    = \frac{\sqrt{2\pi}}{\sqrt{T}}\;\tilde{\delta}[\beta\dot{n}]  \;\deltaF{\gamma\textstyle{\oint} k\,\varphi}
    \eea
\end{thm}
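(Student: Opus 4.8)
The statement asserts that the functional delta-function, whose argument $X(\tau)\equiv\beta\dot n(\tau)+\gamma\oint k\,\varphi$ is again a function on $S^1$, factorizes into a piece living on the inhomogeneous sector of $n$ and an ordinary delta-function on the single homogeneous mode. The plan is to make this precise by passing to the orthonormal Fourier basis $\{e_m\}_{m\in\Z}$ of (\ref{eq:basisS1}) and checking the factorization mode by mode. Expand $X=\sum_{m\in\Z}X_m e_m$. The term $\gamma\oint k\,\varphi$ is a constant (it does not depend on $\tau$), so it contributes only to the zero mode; and $\beta\dot n$ has no zero mode, since $\oint\dot n=\int_0^T\dot n\,d\tau=n(T)-n(0)=0$ for any periodic $n$. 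Hence $X_m=\beta(\dot n)_m$ for $m\neq0$, while
\[
X_0=\oint e_0\,X=\frac{1}{\sqrt T}\oint X=\frac{1}{\sqrt T}\,\gamma T\oint k\,\varphi=\sqrt T\,\gamma\oint k\,\varphi,
\]
the factor $\sqrt T$ arising from $e_0=1/\sqrt T$.

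Next I would invoke the product definition of the functional delta-function fixed in the measure conventions above, $\tilde\delta[X]=\prod_{m\in\Z}\sqrt{2\pi}\,\delta(X_m)$, and split the product into the $m\neq0$ block and the $m=0$ factor:
\[
\tilde\delta[X]=\Bigl(\prod_{m\neq0}\sqrt{2\pi}\,\delta\bigl(\beta\,(\dot n)_m\bigr)\Bigr)\cdot\sqrt{2\pi}\,\delta\bigl(\sqrt T\,\gamma\oint k\,\varphi\bigr).
\]
By definition the first block is $\tilde\delta[\beta\dot n]$ read as a functional delta-function on the inhomogeneous subspace $L^2_{\perp const}(S^1)$; this is well posed because $n\mapsto\dot n$ maps $L^2_{\perp const}(S^1)$ into itself, so no zero mode is being constrained twice. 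For the $m=0$ factor I would apply the elementary scaling rule $\delta(\lambda x)=|\lambda|^{-1}\delta(x)$ with $\lambda=\sqrt T>0$ to obtain $\sqrt{2\pi}\,\delta(\sqrt T\,\gamma\oint k\,\varphi)=\frac{\sqrt{2\pi}}{\sqrt T}\,\delta(\gamma\oint k\,\varphi)$. Multiplying the two factors gives exactly (\ref{eq:theorem1}).

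The only point that really needs care — and the step I expect to be the main obstacle — is the normalization bookkeeping: tracking the $\sqrt T$ produced by projecting $X$ onto the unit-norm mode $e_0=1/\sqrt T$, and verifying that the split of the infinite product over $m$ is carried out consistently with the $\sqrt{2\pi}$ weights of the chosen path-integral measure, so that no factor is dropped or double counted. Note that the explicit action of $\tilde\delta[\beta\dot n]$ on test functionals is not required for this statement; that is the content of the subsequent claim (\ref{eq:theorem20}).
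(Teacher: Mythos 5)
Your proof is correct and rests on essentially the same idea as the paper's: splitting into the homogeneous mode and the inhomogeneous sector, using $\oint \dot n = 0$ for periodic $n$, the fact that the constant $\gamma\oint k\,\varphi$ lives entirely in the $e_0 = 1/\sqrt{T}$ mode, and the $\sqrt{2\pi}$ and $\sqrt{T}$ normalization bookkeeping. The only (cosmetic) difference is that the paper verifies the identity weakly, against a test functional $f[n,\varphi]$, via the Fourier representation $\int D\rho\, e^{i\oint \rho\,(\beta\dot n + \gamma\oint k\varphi)}$ with the split $\rho = \rho_0 e_0 + \tilde\rho$, whereas you decompose the argument itself and use the product-over-modes form of the delta-functional together with the scaling $\delta(\sqrt{T}\,x) = \delta(x)/\sqrt{T}$; both routes are sanctioned by the paper's stated measure and delta-function conventions and give the same factor $\sqrt{2\pi}/\sqrt{T}$.
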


\begin{proof}
    \bea{}
     && \int\limits_{L^2(S^1)} D n \int\limits_{L^2(S^1)} D \varphi\; \DeltaF{\beta\dot{n} + \gamma\textstyle{\oint} k\,\varphi}\; {f [n,\varphi]}
     \nonumber\\
     && =\int\limits_{L^2(S^1)} D n \int\limits_{L^2(S^1)} D \varphi  \int\limits_{L^2(S^1)} D \rho\;\; e^{i\oint \rho(\beta\dot{n} + \gamma\oint k\,\varphi)}\; {f [n,\varphi]}
     \nonumber\\
     && =\int\limits_{L^2(S^1)} D n \int\limits_{L^2(S^1)} D \varphi  \int\limits_{L^2(S^1)/const} D \tilde{\rho} \; e^{i\beta\oint \tilde{\rho}\dot{n}}
     \int\limits_{\R} \frac{d\rho_0}{\sqrt{2\pi}}\; e^{i \rho_0\sqrt{T}\gamma\oint k\,\varphi}\; {f [n,\varphi]}\;
     \nonumber\\
     && =\int\limits_{L^2(S^1)} D n \; \tilde{\delta}[\,\beta\dot{n}\,]\int\limits_{L^2(S^1)} D \varphi\;  \frac{\sqrt{2\pi}}{\sqrt{T}}\deltaF{\gamma\textstyle{\oint} k\,\varphi} \;{f [n,\varphi]}\;,
    \eea
where we used decomposition of functions on $S^1$ into homogeneous and inhomogeneous subspaces $\rho(\tau)=\rho_0 e_0(\tau)+\tilde{\rho}(\tau)$  and $e_0(\tau)$ - homogeneous basis function (\ref{eq:basisS1})\if{\footnote{Note the measure and $\delta$-function conventions we use in this paper:
    \bea{}
     && \int\limits_{L^2(S^1)} D \varphi\; e^{-\frac12\oint \varphi^2}=1\;;
     \nonumber\\
     && \int\limits_{L^2(S^1)} D \varphi\; f[\varphi(\tau)] = \int\limits_{L^2(S^1)} \prod\limits_{m\in\Z} \frac{d\varphi_m}{\sqrt{2\pi}}  \;f[\sum\limits_{m\in\Z}\varphi_m e_m(\tau)]\;;
     \nonumber\\
     && f[\psi] = \int\limits_{L^2(S^1)}\!\!\!\! D \varphi\; \DeltaF{\varphi-\psi}\; {f [\varphi]}
     =  \int\limits_{L^2(S^1)}\!\!\!\! D \varphi \!\!\!\! \int\limits_{L^2(S^1)}\!\!\!\! D \rho\; e^{i\oint \rho(\varphi-\psi)} \; {f [\varphi]} \nonumber
   \;,
    \eea
so that $\DeltaF{\varphi-\psi} = \prod\limits_{m\in\Z} \sqrt{2\pi} \deltaF{\varphi_{m}{-}\psi_{m}}$ under spectral decomposition $\varphi(\tau)=\sum\limits_{m\in\Z}\varphi_m e_m(\tau)$ w.r.t. basis (\ref{eq:basisS1}).

For ``inhomogeneous'' subspace ${L^2_{\perp const}(S^1)}\equiv \{\tilde{\varphi}\,|\; \oint \tilde{\varphi} =0\}$ and correspondent delta-functions:
    \bea{}
     && \int\limits_{L^2(S^1)}\!\!\!\! D \varphi\; f[\varphi] =  \!\!\!\!\int\limits_{L^2_{\perp const}(S^1)}\!\!\!\!\!\!\!\! D \hat{\varphi} \int\limits_{\R} \frac{d\varphi_0}{\sqrt{2\pi}}\;f[\tilde{\varphi}(\tau)+\varphi_0 e_0(\tau)]\;;
     \nonumber\\
     && f[\tilde{\psi}] = \!\!\!\! \int\limits_{L^2_{\perp const}(S^1)}\!\!\!\! D \tilde{\varphi}\;\; \tilde{\delta}[\tilde{\varphi}-\tilde{\psi}] \;{f [\tilde{\varphi}]}
     \;=  \!\!\!\! \int\limits_{L^2_{\perp const}(S^1)}\!\!\!\!\! D \tilde{\varphi} \!\!\!\!
      \int\limits_{L^2_{\perp const}(S^1)}\!\!\!\!\! D \tilde{\rho}\;\; e^{i\oint \tilde{\rho}(\tilde{\varphi}-\tilde{\psi})} \; {f [\tilde{\varphi}]} \nonumber
   \;.
    \eea
} }\fi

\end{proof}

As was noted, the decoupling of delta-functions could be predicted by the fact that gauge-fixing condition $\beta\dot{n} + \gamma\textstyle{\oint} k\,\varphi=0$ implies separate vanishing of $\dot{n}$ and $\textstyle{\oint} k\,\varphi$. The main purpose of the theorem is to fix extra factors under given normalization of path integrals and delta-functions.
\\

Going further in calculation of (\ref{eq:P}) with the help of two following theorems one can perform integration with correspondent delta-functions in $n$ and $\varphi$ sectors.

\begin{thm}
 On \math{S^1}
    \bea{theorem2}
    \int\limits_{L^2(S^1)}\!\!\!\! Dn\;\tilde{\delta}[\beta\dot{n}]\; f[n]
    \quad = \quad
    \frac{|\beta|}{\sqrt{T}}
    \int\limits_{\R} \frac{dc}{\sqrt{2\pi}} \;f[c]
    \eea
\end{thm}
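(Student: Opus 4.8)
The plan is to reduce the functional delta-function $\tilde\delta[\beta\dot n]$ on $L^2(S^1)$ to the product of ordinary delta-functions over the Fourier modes of $n$, exactly as dictated by the conventions recalled just before the statement. First I would expand $n(\tau)=\sum_{m\in\Z} n_m e_m(\tau)$ in the orthonormal basis \eqref{eq:basisS1}, so that $\dot n(\tau)$ has no homogeneous component: $\beta\dot n$ lives entirely in $L^2_{\perp const}(S^1)$, its expansion coefficients being $\beta$ times linear combinations of the $n_m$ with $m\neq 0$. Concretely, $\frac{d}{d\tau}$ maps the pair $(e_{-m},e_m)$ for $m\in\N$ to $\mp\frac{2\pi m}{T}(e_m,e_{-m})$, i.e. it acts as a rotation composed with multiplication by $\frac{2\pi m}{T}$ on each two-dimensional block. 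Hence on the inhomogeneous sector $\beta\frac{d}{d\tau}$ is invertible with $|\det|$ over the block labelled by $m$ equal to $\beta^2(2\pi m/T)^2$.

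Next I would use the stated convention $\tilde\delta[\tilde\varphi-\tilde\psi]=\prod_{m\neq 0}\sqrt{2\pi}\,\delta(\varphi_m-\psi_m)$ together with the Jacobian just computed: writing $\rho\equiv\beta\dot n$ as a new variable on $L^2_{\perp const}(S^1)$, one has $\tilde\delta[\beta\dot n]=\big(\prod_{m\neq 0}|\beta|^{-1}\big)\cdot(\text{Jacobian of the rotation, which is }1)\cdot\tilde\delta[\,\tilde n\,]$ in the sense that integrating against it sets the inhomogeneous part $\tilde n$ to zero and produces the factor $\prod_{m\neq0}|\beta|^{-1}$. Then I would split the measure as in the conventions box, $\int_{L^2(S^1)}Dn=\int_{L^2_{\perp const}(S^1)}D\tilde n\int_\R \frac{dn_0}{\sqrt{2\pi}}$, carry out the $\tilde n$-integration against $\tilde\delta[\beta\dot n]$ which localizes $\tilde n=0$ and leaves the $m\neq 0$ product of $|\beta|^{-1}$ factors, and rename $n_0/\sqrt T\to c$ (recalling $e_0=1/\sqrt T$) so that $f[n]\to f[c]$ and $\frac{dn_0}{\sqrt{2\pi}}=\sqrt T\,\frac{dc}{\sqrt{2\pi}}$.

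The one genuinely nontrivial point — and the step I expect to be the main obstacle — is assembling the infinite product $\prod_{m\neq 0}|\beta|^{-1}$ into the finite factor $|\beta|/\sqrt T$ claimed in \eqref{eq:theorem2}. This is a zeta-function regularization: $\prod_{m\neq0}|\beta|^{-1}=|\beta|^{-\#\{m\neq 0\}}$ where the number of nonzero modes is assigned the value $2\zeta_R(0)=-1$, exactly the computation $\zeta_\dd(0)=-1$ already used in Appendix \ref{sect:Det}. Thus $\prod_{m\neq0}|\beta|^{-1}=|\beta|^{1}$. The remaining $\sqrt T$ in the denominator comes cleanly from the measure rescaling $n_0\to c$ described above; no further regularization is needed there since it is a single mode. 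I would therefore organize the proof as: (i) Fourier expansion and the block structure of $d/d\tau$; (ii) change of variables in the inhomogeneous sector and identification of the per-mode Jacobian $|\beta|^{-1}$; (iii) regularized evaluation $\prod_{m\neq 0}|\beta|^{-1}=|\beta|^{2\zeta_R(0)}=|\beta|$, citing the same mechanism as \eqref{eq:lndetdd}; (iv) the homogeneous-mode measure rescaling giving the $1/\sqrt T$, and collecting the constant to match the right-hand side.
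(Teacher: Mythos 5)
Your route is essentially the paper's (Fourier decomposition in the basis (\ref{eq:basisS1}), one ordinary delta-function per inhomogeneous mode, zeta-regularization of the resulting infinite product, and a rescaling of the zero mode), but there is a genuine error at the central step. The per-mode factor produced by $\tilde{\delta}[\beta\dot{n}]$ is \emph{not} $|\beta|^{-1}$: as you yourself observe in your first paragraph, $\beta\,d/d\tau$ acts on the $m$-th two-dimensional block as a rotation composed with multiplication by $\beta\,\tfrac{2\pi m}{T}$, so integrating $\delta\!\left(\beta\tfrac{2\pi m}{T}c_{-m}\right)$ yields $\left|\tfrac{T}{2\pi m\beta}\right|$ per mode. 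The statement ``Jacobian of the rotation, which is $1$'' silently discards the scale factors $\tfrac{2\pi m}{T}$, but a rotation times a dilation is not a rotation, and these $m$-dependent factors are exactly where the $T$-dependence of the answer originates. Their regularization needs both $\zeta_R(0)$ \emph{and} $\zeta_R'(0)$ (the same mechanism as in (\ref{eq:lndetdd})): the paper evaluates
\begin{equation*}
\prod_{m\in\Z\neq0}\Big|\frac{T}{2\pi m\beta}\Big|
= e^{\,2\zeta_R(0)\ln\left|\frac{T}{2\pi\beta}\right| + 2\zeta_R'(0)}
= \frac{|\beta|}{T},
\end{equation*}
whereas your step (iii) keeps only the $\beta$-part and gets $|\beta|$.

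Your zero-mode bookkeeping is also internally inconsistent and happens to hide this gap: from $n_0=c\sqrt{T}$ you correctly write $\frac{dn_0}{\sqrt{2\pi}}=\sqrt{T}\,\frac{dc}{\sqrt{2\pi}}$, i.e.\ the rescaling contributes a factor $+\sqrt{T}$, yet in step (iv) you claim it supplies the $1/\sqrt{T}$. With your Jacobian $|\beta|$ and the correct $+\sqrt{T}$ you would land on $|\beta|\sqrt{T}$, off by a factor of $T$ from (\ref{eq:theorem2}); the missing $1/T$ is precisely the zeta-regularized value of the dropped product $\prod_{m\neq0}\frac{T}{2\pi|m|}$. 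Restoring it gives $\frac{|\beta|}{T}\cdot\sqrt{T}=\frac{|\beta|}{\sqrt{T}}$, which is exactly how the paper's proof closes.
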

\begin{proof} Using spectral decomposition \math{\dot{n}(\tau) = \sum\limits_{m\in\Z} c_m \dot{e}_m = \sum\limits_{m\in\Z} c_m \frac{2\pi}{T} (-m) e_{-m}(\tau)} one comes to
    \bea{}
     &\int Dn\;\tilde{\delta}[\beta\dot{n}]\; f[n]
     &=   \int \prod\limits_{k\in\Z} \frac{d c_k}{\sqrt{2\pi}}\;
      \prod\limits_{m\in\Z\neq0} \sqrt{2\pi}\deltaF{\beta\frac{2\pi(+m)}{T}c_{-m}} f[\sum c_m e_m]\;
      \nonumber\\
     &&=  \prod\limits_{m\in\Z\neq0}\!\!\big|{\frac{T}{2\pi m \beta}}\big| \;
          \times\! \int\limits_\R \frac{d c_0}{\sqrt{2\pi}}\;
           f[c_0\,e_0]
      \nonumber\\
     &&=  \frac{|\beta|}{\sqrt{T}}
    \int\limits_{\R} \frac{dc}{\sqrt{2\pi}} \;f[c]
    \eea
Product factor $\prod\limits_{m\in\Z\neq0} \big|{\frac{T}{2\pi m \beta}}\big|$ is nothing but the inverse of (absolute value of) the operator's $\beta \partial_\tau$ determinant on $L^2_{\perp const}(S^1) \equiv \{h(\tau):\; \oint h(\tau) =0\}\,$ -- the space of functions on $S^1$ orthogonal to constant functions. This determinant can be expressed in terms of Riemann zeta-function \cite{Abr-Stegun}:
    \bea{}
     &&\prod\limits_{m\in\Z\neq0} \!\!\big|{\frac{T}{2\pi m \beta}}\big|
     = e^{\sum\limits_{m\in\Z\neq0} \ln{\big|{\frac{T}{2\pi\beta}}\big|}+ \sum\limits_{m\in\Z\neq0} \ln{\big|{\frac{1}{m}}\big|}}
     = e^{\ln{|{\frac{T}{2\pi\beta}}|\cdot 2\zeta_R(0) + 2\zeta'_R(0)}}
     = e^{-\ln{|{\frac{T}{2\pi\beta}}| - \ln{2\pi}}}
     = \frac{|\beta|}{T}\;.\;\;
    \eea
Extra $\sqrt{T}$ comes from rescaling of integration variable $c_0=c\sqrt{T}$.

\end{proof}
Also the product factor in calculations above could be expressed as zeta-renormalized inverse of square root of determinant of operator $-\beta^2\frac{d^2}{d\tau^2}$ (\ref{eq:lndetdd}) on the same space which leads to the same result.
\\


Next theorem shows the result of integrating  one-dimensional delta-function:

\begin{thm}
    \bea{theorem3}
     &&\sqrt{2\pi}\!\!\int\limits_{L^2(S^1)}\!\!\!\! D\varphi\;\deltaF{\gamma\oint k\varphi}\; f[\varphi]
     \quad = \quad
     \frac{1}{\sqrt{\gamma^2\oint k^2}}
    \!\! \int\limits_{L^2_{\perp k}(S^1)}\!\!\!\! D {\varphi_{\perp k}}\; f[{\varphi_{\perp k}}]
    \eea
\end{thm}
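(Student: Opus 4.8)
The plan is to reduce the single scalar constraint $\gamma\oint k\,\varphi = 0$ to an ordinary one-dimensional delta-function by choosing an orthogonal decomposition of $L^2(S^1)$ adapted to the function $k(\tau)$. First I would write $L^2(S^1) = L^2_{\perp k}(S^1) \oplus \mathbb{R}\,\hat k$, where $\hat k(\tau) = k(\tau)/\sqrt{\oint k^2}$ is the normalized direction picked out by $k$, so that any $\varphi$ decomposes uniquely as $\varphi(\tau) = \varphi_{\perp k}(\tau) + s\,\hat k(\tau)$ with $s = \oint \hat k\,\varphi \in \mathbb{R}$ and $\oint k\,\varphi_{\perp k} = 0$. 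Under the path-integral measure conventions stated in the preamble, this decomposition factorizes the measure as $D\varphi = D\varphi_{\perp k}\, \frac{ds}{\sqrt{2\pi}}$, since $\hat k$ is a unit vector and the Gaussian normalization $\int D\varphi\, e^{-\frac12\oint\varphi^2}=1$ is preserved by any orthonormal change of basis (this is the analogue of the $L^2(S^1) = L^2_{\perp const}(S^1)\oplus \mathbb{R}e_0$ split already used in the proof of Theorem~\ref{eq:theorem2}, with $e_0$ replaced by $\hat k$).

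Next I would evaluate the argument of the delta-function on this decomposition: $\gamma\oint k\,\varphi = \gamma\, s\sqrt{\oint k^2}$, which depends only on the single real coordinate $s$. Hence
\[
\sqrt{2\pi}\int\limits_{L^2(S^1)}\!\! D\varphi\;\deltaF{\gamma\textstyle\oint k\varphi}\, f[\varphi]
= \sqrt{2\pi}\!\!\int\limits_{L^2_{\perp k}(S^1)}\!\!\!\! D\varphi_{\perp k} \int\limits_{\mathbb{R}}\frac{ds}{\sqrt{2\pi}}\;
\delta\!\left(\gamma\, s\,\sqrt{\textstyle\oint k^2}\right) f[\varphi_{\perp k} + s\,\hat k].
\]
Then I integrate out $s$ using the elementary scaling rule $\delta(as) = |a|^{-1}\delta(s)$ with $a = \gamma\sqrt{\oint k^2}$; the $s$-integral collapses to $s=0$, killing the $\hat k$-component of $\varphi$, and produces the factor $1/|\gamma\sqrt{\oint k^2}| = 1/\sqrt{\gamma^2\oint k^2}$. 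The two $\sqrt{2\pi}$ factors (the explicit prefactor and the one from $ds/\sqrt{2\pi}$) cancel, leaving exactly the right-hand side of \eqref{eq:theorem3}.

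The only genuinely delicate point — and the step I expect to be the main obstacle — is justifying that the measure splits cleanly as $D\varphi = D\varphi_{\perp k}\,\frac{ds}{\sqrt{2\pi}}$ with no anomalous Jacobian. This requires that the change of orthonormal basis from $\{e_m\}$ to one containing $\hat k$ has unit functional determinant in the zeta-regularized sense; since the transformation is orthogonal (it fixes the $L^2$ inner product) its determinant is $\pm 1$, and the regularization is insensitive to the sign, so no extra $T$-dependent or divergent factor appears. I would state this explicitly, appealing to the same reasoning already invoked for the $e_0$-split in Theorems~\ref{eq:theorem1} and~\ref{eq:theorem2}. Everything else is the one-line scaling identity for the one-dimensional delta-function and bookkeeping of the $\sqrt{2\pi}$'s.
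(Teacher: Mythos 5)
Your proof is correct, but it follows a genuinely different route from the paper. You rotate to an orthonormal decomposition $L^2(S^1)=L^2_{\perp k}(S^1)\oplus\mathbb{R}\,\hat k$ adapted to $k$, factorize the measure, and use the elementary scaling $\delta(as)=|a|^{-1}\delta(s)$; the whole burden then rests on the measure factorization $D\varphi = D\varphi_{\perp k}\,\frac{ds}{\sqrt{2\pi}}$, i.e.\ on the statement that an orthogonal change of basis has unit (zeta-regularized) Jacobian — exactly the delicate point you flag. The paper instead keeps the original measure and handles the constraint by the source trick: it writes $f$ as $f\bigl[\frac{1}{i}\frac{\delta}{\delta J}\bigr]$, represents $\delta\bigl(\gamma\oint k\varphi\bigr)$ by an integral over an auxiliary $\rho_0$, regulates with $e^{-\epsilon\frac12\oint\varphi^2}$, does the Gaussian integral via the block Hessian (producing the projector $P_{\perp k}$ and the factor $\gamma^2\oint k^2/\epsilon$), and then uses the $\zeta_{\mathbb I}(0)-1$ leftover powers of $\sqrt\epsilon$ in the limit $\epsilon\to0$ to reconstruct $\delta_{\perp k}[P_{\perp k}J]$, which is finally traded for the integral over $L^2_{\perp k}(S^1)$. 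What the paper's longer route buys is explicit control of the regularization bookkeeping (mode counting via $\zeta_{\mathbb I}(0)$), which is the same machinery that elsewhere produces nontrivial finite factors such as powers of $T$; what your route buys is brevity and transparency of where the factor $1/\sqrt{\gamma^2\oint k^2}$ comes from. Note that the paper too ultimately invokes a basis adapted to $k$ (its footnote with $\breve e_0=k/|k|$) to identify $\delta_{\perp k}$ and the measure $D\varphi_{\perp k}$, so your appeal to rotation invariance of the Gaussian-normalized measure is consistent with the paper's conventions; stating that assumption explicitly, as you do, makes your argument complete.
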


\begin{proof}
Here it is convenient to use method of dealing with delta-functions in path integrals analogous to that used in Appendix in \cite{Barvinsky:2010yx}.
    \bea{}
     &\sqrt{2\pi}\!\!\! \displaystyle{\int\limits_{L^2(S^1)}}\!\!\!\! D\varphi\;\deltaF{\gamma\oint k\varphi}\; f[\varphi]
     &\equiv \;
     f\left[{\textstyle\frac{1}{i}\frac{\var}{\var J}}\right]
     \sqrt{2\pi} \lim_{\epsilon\to 0}\!\! \int\limits_{L^2(S^1)}\!\!\!\! D\varphi\;
          \deltaF{\gamma\oint k\varphi}\;
          e^{i\oint\varphi J} e^{-\epsilon\frac12\oint \varphi^2} \Big|_{J(\tau)=0}
     \nonumber\\
     && = \;
     f\left[{\textstyle\frac{1}{i}\frac{\var}{\var J}}\right]
      \lim_{\epsilon\to 0} \!\! \int\limits_{L^2(S^1)}\!\!\!\! D\varphi
     \int\limits_{\R} \frac{d\rho_0}{\sqrt{2\pi}}\; e^{i \rho_0 \gamma\oint k\,\varphi}\;
         e^{i\oint\varphi J} e^{-\epsilon\frac12\oint \varphi^2} \Big|_{J(\tau)=0}
     \\
     && = \;
     f\left[{\textstyle\frac{1}{i}\frac{\var}{\var J}}\right]
      \;\lim_{\epsilon\to 0}
     \left(\Det \frac{\var^2 S[\varphi(\tau),\rho_0]}{\var(\varphi(\tau),\rho_0)^2}\right)^{-1/2}
      e^{-S[\varphi(\tau),\rho_0]\big|_{on-shell}} \Big|_{J(\tau)=0} \nonumber
    \eea
where $S[\varphi(\tau),\rho_0]=\oint\left( - i \rho_0 \gamma\, k\,\varphi - i\,\varphi\, J +\epsilon\frac12 \varphi^2 \right)$. To make integrals well defined we have introduced the regularization of the initial integrals with regularization parameter $\epsilon$ (which is auxiliary and should not be mixed with $\varepsilon$ parameter in the main part of the article).

Equations of motion are $\epsilon \varphi (\tau)= i \gamma \rho_0 k(\tau) + i J(\tau)$ and $\oint k(\tau)\varphi(\tau)=0$, so on-shell one can express $\varphi(\tau)$ and $\rho_0$ in terms of $J(\tau)$ and $k(\tau)$ which gives $\rho_0=-\frac{\oint kJ}{\gamma\oint k^2}$ and $\varphi (\tau)= \frac{i}{\epsilon} P_{\perp k} J(\tau)$, where $P_{\perp k}$ -- projector orthogonal to $k$:
    \bea{}
     P_{\perp k} J(\tau) \equiv \oint d\tau' \left(\deltaF{\tau{-}\tau'} - \frac{k(\tau)k(\tau')}{\oint k^2}\right) J(\tau')\;.
    \eea

So, finally
    \bea{}
     S[\varphi(\tau),\rho_0]\Big|_{on-shell} = \frac{1}{\epsilon} \frac12 \oint  J P_{\perp k} J \;.
    \eea

Determinant of the hessian
    \bea{}
     \frac{\var^2 S[\varphi(\tau),\rho_0]}{\var(\varphi(\tau),\rho_0)\;\var(\varphi(\tau'),\rho_0)}
     =  \left(
                     \begin{array}{cc}
                       \epsilon\deltaF{\tau-\tau'} & -i\gamma k(\tau) \\
                        -i\gamma k(\tau')   &  0\\
                     \end{array}
                   \right)
     \eea
where the quadratic form acts on vector of $L^2(S^1)\oplus \R$ space, by the determinant property of block matrices gives
    \bea{DetTheor3}
     \Det \frac{\var^2 S[\varphi(\tau),\rho_0]}{\var(\varphi(\tau),\rho_0)\;\var(\varphi(\tau'),\rho_0)}
     =  \Det \epsilon \mathbb{I}_{L^2(S^1)} \times \frac{\gamma^2}{\epsilon} \oint k^2\;.
     \eea
Since $\Det \epsilon \mathbb{I}_{L^2(S^1)} \;= \;\epsilon^{\zeta_{\mathbb I}(0)}$ then
    \bea{}
     &&\sqrt{2\pi} \int\limits_{L^2(S^1)}\!\!\!\! D\varphi\;\deltaF{\gamma\oint k\varphi}\; f[\varphi]
     \; = \;
     f\left[{\textstyle\frac{1}{i}\frac{\var}{\var J}}\right]
     \frac{1}{\sqrt{\gamma^2\oint k^2}}
      \;\lim_{\epsilon\to 0}\;
      \frac1{\sqrt{\epsilon}^{\zeta_{\mathbb I}(0)-1}}
      e^{-\frac{1}{\epsilon} \frac12 \oint  J P_{\perp k} J}\Big|_{J(\tau)=0}
    \eea
 $\zeta_{\mathbb I}(0)$ -- quantity ``counting'' number of modes on $L^2(S^1)$. Its regularized value is expressed in terms of Riemann zeta-function $\epsilon^{\zeta_{\mathbb I}(0)}= \epsilon^{2\zeta_R(0) +1} =1$. But in fact there is no need in this here, since we will utilize all $\zeta_\mathbb{I}(0){-}1$ powers of $\sqrt{\epsilon}$ coming from $-1/2$ power of the determinant (\ref{eq:DetTheor3}) to obtain in the limit $\epsilon\to0$  product of $\zeta_\mathbb{I}(0){-}1$ delta-functions for each mode on $L^2_{\perp k}(S^1)$ space.
\footnote{On $L^2(S^1)$ one can introduce some orthonormal basis $\breve{e}_m(\tau)$, $m\in\Z$ so that $\breve{e}_0(\tau)=k(\tau)/|k|$, then $J_{\tau}\equiv\sum_{m\in \Z'} J_m \breve{e}_m(\tau)$ and  $\oint  J P_{\perp k} J = \sum_{m\in \Z'} J_m^2$ where $m\in \Z'$ runs through all the modes except 0-th mode $\breve{e}_0(\tau)$. ``Number'' of all modes in $L^2(S^1)$ basis is  $\zeta_{\mathbb I}(0)$. So $\zeta_{\mathbb I}(0){-}1$ is the number of modes in basis of $L^2_{\perp k}(S^1)$ -- the space of functions orthogonal to $k(\tau)$. Thus
    \bea{}
     &&
      \;\lim_{\epsilon\to 0}\;
      \frac1{\sqrt{\epsilon}^{\zeta_{\mathbb I}(0)-1}}
      e^{-\frac{1}{\epsilon} \frac12 \oint  J P_{\perp k} J}
     =
      \;\lim_{\epsilon\to 0}\; \prod_{m\in\Z'} \left(
      \frac1{\sqrt{\epsilon}} \cdot
      e^{-\frac{1}{\epsilon} \frac12  J_m J_m}\right)
     =
      \prod_{m\in\Z'} \sqrt{2\pi} \deltaF{J_m}
     =
      {\delta}_{\perp k}[P_{\perp k} J(\tau)]\;. \nonumber
    \eea
%
}.

Thus the limit in r.h.s. precisely gives ${\delta}_{\perp k}\left[P_{\perp k} J(\tau)\right]$.
Exploiting the integral representation of the latter
    \bea{}
     &&  {\delta}_{\perp k}[P_{\perp k} J(\tau)]
     \;=  \int\limits_{L^2_{\perp k}(S^1)}\!\!\!\! D \varphi_{\perp k} \; e^{i\oint \varphi_{\perp k} J} \;.
    \eea
applying $f\left[{\textstyle\frac{1}{i}\frac{\var}{\var J}}\right]  e^{i\oint \varphi_{\perp k} J} = f[\varphi_{\perp k}]\; e^{i\oint \varphi_{\perp k} J}$ and setting $J(\tau)$ to zero one finally proves the statement of the theorem (\ref{eq:theorem3})
.

\end{proof}

Combining together three theorems above one gets (\ref{eq:integraldelta}) in the form (\ref{eq:reducedrep})
    \beq{reducedrep_app}
     \int\limits_{L^2(S^1)}\!\!\! D n \int\limits_{L^2(S^1)}\!\!\! D \varphi\; \DeltaF{\beta\dot{n} + \gamma\textstyle{\oint} k\,\varphi}\; e^{-\varGamma_{2} [n,\varphi]}
    =
     \frac{|\beta|}{|\gamma|} \frac{1}{\sqrt{\oint k^2}}
    \frac{1}{T} \!\!\!
    \int\limits_{L^2_{\perp k}(S^1)}\!\!\!\!\! D{\varphi}_{\perp k} \int\limits_{\R} \frac{dc}{\sqrt{2\pi}}
    \; e^{-\varGamma_{2} [c,\varphi_{\perp k}]} \;.
    \eeq

\section{Changing the gauge in Gaussian integration}
\label{sect:Base}

Here we justify the relation (\ref{eq:det_rel_eF}) using the analogy with the finite-dimensional vector case.

 Let $A$ be degenerate finite dimensional symmetric operator on linear space $V$ with one zero eigenvector $|g\!>$.

Determinant of $A$ is zero, but one is often interested in 'regularized' determinant $\det_*A$ which is defined as the product of its nonzero eigenvalues.

The latter object has the integral representation (to simplify the formulae we will assume $A$ does not have negative eigenvalues, and consider the simplest norm in $V$):
   \begin{eqnarray} \label{det_perp_g A}
     &&\big({\det}_* A\big)^{-1/2} =
     \big({\det}_{\perp g} A\big)^{-1/2}
     \equiv \int\limits_{V_{\perp g}}\frac{d^{n-1}x'\;}{(2\pi)^{(n-1)/2}}\;e^{-\frac12 <x'| A |x'>},
    \end{eqnarray}
where $<\!.,.\!>$ is the Euclidean inner product in $V=\R^n$ and $x'$ runs over $V_{\perp g}$ , which is $(n{-}1)$-dimensional subspace of initial $\R^n$ which is orthogonal to $|g\!>$. In the exponent $|x'\!>$ is assumed to be immersed in $V $ as $|x',0\!>$. In $|x\!>=|x',z\!>$ the one-dimensional parameter $z$ is the coordinate along $|g\!>$. Such integral representation can be interpreted as the \emph{determinant of operator over particular subspace} of initial vector space.

One can write the 'covariant' integral over all $V$ introducing the delta-function
   \begin{eqnarray} \label{det_perp_g A_Rn}
     && \big({\det}_{\perp g} A\big)^{-1/2}
     =\int\limits_{V}\frac{d^{n}x }{(2\pi)^{(n-1)/2}}\;\delta(<x|e_g\!>)e^{-\frac12 <x| A |x>}\;,
    \end{eqnarray}
where $|e_g\!>=\frac{|g\!>}{|g|}$. Note that $\delta(<x|e_g\!>)$ is nothing but $\delta(z)$.

Let $k$ be some vector so that $<\! k,g \!>\neq0$ and define the following object -- the determinant of $A$ over subspace, orthogonal to $|e_k\!>=\frac{|k>}{|k|}$:
    \begin{eqnarray} \label{det_perp_k A}
     &\big({\det}_{\perp k} A\big)^{-1/2} &\equiv  \int\limits_{V_{\perp k}}\frac{d^{n-1}x''}{(2\pi)^{(n-1)/2}}\;e^{-\frac12 <x''| A |x''>}
     =\int\limits_{V}\frac{d^{n}x }{(2\pi)^{(n-1)/2}}\;\delta(<x|e_k\!>)
     \;e^{-\frac12 <x| A |x>}\;.
    \end{eqnarray}

The algebraic relation between such objects is
    \begin{eqnarray} \label{det_rel}
      {\det}_{\perp k} A = \frac{<\!k,g\!>^2}{|k|^2\,|g|^2} \,{\det}_{\perp g} A \;.
    \end{eqnarray}

This can be shown via integral representations as follows:
   \begin{eqnarray}
     &\big({\det}_{\perp g} A\big)^{-1/2}
     &= \int\limits_{V}\frac{d^{n-1}x' dz}{(2\pi)^{(n-1)/2}}\;\delta(z)e^{-\frac12 <x',z| A |x',z>}
     \nonumber\\
     &&=\int\limits_{V}\frac{d^{n-1}x' dz}{(2\pi)^{(n-1)/2}}\;\delta(z\!-\!f(x'))
     \;e^{-\frac12 <x',z| A |x',z>}
     \nonumber\\
     &&
     = \int\limits_{V}\frac{d^{n-1}x' dz}{(2\pi)^{(n-1)/2}}\;\delta\big(<x|e_g\!>+\frac{<x'|e_k\!>}{<\!e_g|e_k\!>}\big)
     \;e^{-\frac12 <x',z| A |x',z>}
     \nonumber\\
     &&= \;|<\!e_g|e_k\!>|\int\limits_{V}\frac{d^{n}x }{(2\pi)^{(n-1)/2}}\;\delta(<x|e_k\!>)
     \;e^{-\frac12 <x| A |x>}\;,
     \nonumber
    \end{eqnarray}
where we used the degenerateness of $A$ in $g$-direction:
$$<\! x'|A|x' \!> \,\equiv\, <\! x',0|A|x',0 \!> \,=\, <\! x',f(x')|A|x',f(x')\!> $$
and \; $$<x|e_k\!>=<x|e_g\!><e_g|e_k\!>+<x|\Pi_\perp|e_k\!>=\big(<\!x|e_g\!>+\frac{<\!x'|e_k\!>}{<\!e_g|e_k\!>}\big)<\!e_g|e_k\!>.$$

In the article we use the generalization of the property above for the infinite dimensional vector space $L^2(S^1)$
    \begin{eqnarray} 
      {\Det}_{\perp k} F = \frac{\left(\oint k g\right)^2}{\oint k^2 \oint g^2} \,{\Det}_{\perp g} F \;,
    \end{eqnarray}
where integral definitions of determinants are
    \begin{eqnarray}
     \big({\Det}_{\perp f} F \big)^{-1/2} = \; \int\limits_{L^2_{\perp f}(S^1)} D\varphi_{\perp f}  \;\exp \left\{-\frac{1}{2} \oint \varphi_{\perp f} (\tau) F \varphi_{\perp f} (\tau)\right\}
    \end{eqnarray}
with correspondent $f(\tau)$, and its orthogonal complement $L^2_{\perp f}(S^1)=\{\varphi(\tau)\in L^2(S^1) \,| \; \oint \varphi f=0\}$.
\vskip 5mm

The reasoning for the path integral case is analogous.

\thebibliography{99}

\bibitem{Barvinsky:2007vb}
A. O. Barvinsky and A. Yu. Kamenshchik, JCAP {\bf 09} (2006) 014; Phys. Rev. {\bf D74} (2006) 121502;
  A.~O.~Barvinsky,
  Phys.\ Rev.\ Lett.\  {\bf 99} (2007) 071301
  [arXiv:0704.0083 [hep-th]].

\bibitem{Barvinsky:2010yx}
  A.~O.~Barvinsky,
  JCAP {\bf 1104} (2011) 034
  [arXiv:1012.1568 [hep-th]].

\bibitem{Barvinsky:2010yz}
  A.~O.~Barvinsky and A.~Y.~.Kamenshchik,
  JCAP {\bf 1104} (2011) 035
  [arXiv:1012.1571 [hep-th]].

\bibitem{Barvinsky:2011hv}
  A.~O.~Barvinsky and D.~V.~Nesterov,
  Phys.\ Rev.\ D {\bf 85} (2012) 064006
  [arXiv:1111.4474 [hep-th]].

\bibitem{Barvinsky:2012an}
  A.~O.~Barvinsky and D.~V.~Nesterov,
  J.\ Phys.\ A {\bf 45} (2012) 374001
  [arXiv:1204.3262 [hep-th]].

\bibitem{BatalinVilkovisky}
I. A. Batalin and G. A. Vilkovisky, Phys. Rev. {\bf D28} (1983)
2567.

\bibitem{Abr-Stegun}
  Abramowitz, Milton; Stegun, Irene A., eds. (1972), Handbook of Mathematical Functions with Formulas, Graphs, and Mathematical Tables, New York: Dover Publications, ISBN 978-0-486-61272-0

\end{document}